\newcommand\blfootnote[1]{%
  \begingroup
  \renewcommand\thefootnote{}\footnote{#1}%
  \addtocounter{footnote}{-1}%
  \endgroup
}
\pgfplotsset{compat=newest}
\theoremstyle{plain}
\newtheorem{thm}{Theorem} 
\newtheorem{lemma}{Lemma}
\newtheorem{prop}{Proposition}
\theoremstyle{definition}
\begin{document}
\title{Covert Communication in Fading Channels under Channel Uncertainty}
\author{Khurram Shahzad, Xiangyun Zhou, and Shihao Yan\\ Research School of Engineering, The Australian National University, Canberra, ACT, Australia\\
Email: {\{khurram.shahzad, xiangyun.zhou, shihao.yan\}@anu.edu.au}}
\maketitle

\begin{abstract}
A covert communication system under block fading channels is considered, where users experience uncertainty about their channel knowledge. The transmitter seeks to hide the covert communication to a private user by exploiting a legitimate public communication link, while the warden tries to detect this covert communication by using a radiometer. We derive the exact expression for the radiometer's optimal threshold, which determines the performance limit of the warden's detector. Furthermore, for given transmission outage constraints, the achievable rates for legitimate and covert users are analyzed, while maintaining a specific level of covertness. Our numerical results illustrate how the achievable performance is affected by the channel uncertainty and required level of covertness.
\end{abstract}
\section{Introduction}
Owing to the broadcast nature of wireless communications, their security is of paramount significance, especially when the transmitted information is important and private.
Traditional ways of ensuring secure transmission over the wireless channel incorporate a multitude of encryption techniques, making sure that even after falling in the wrong hands, the integrity of transmitted information remains intact. Physical layer security, on the other hand, minimizes the information obtained by the eavesdroppers by exploiting the varying characteristics of the wireless channel \cite{sean_book}. However, circumstances exist where it is of vital interest that rather than protecting the content of the transmitted message, the communication itself stays undetectable. Such communication requires some form of covertness to make sure that they are undetected by a warden.\blfootnote{This work was supported by the Australian Research Council\textquotesingle s Discovery Projects under Grant DP150103905.}

There has been a recent interest in the information theoretic aspects of undetectable communication, although some practical aspects of such communication have been long studied by the spread spectrum community \cite{ss_simon,ss_torrieri}. Recently, a square root law has been derived, showing that covert and reliable communication is achievable, given that no more than $\mathcal{O}(\sqrt{n})$ bits are transmitted in $n$ channel uses \cite{bash_jsac}. Further work in this regard has considered the extension of this work to binary symmetric channels (BSCs) \cite{jaggi_ISIT_13}, discrete memoryless channels (DMCs) \cite{wang_TIT,bloch_TIT} and multiple access channels (MACs) \cite{bloch_ISIT_16}. The study in \cite{biao_cc} has shown covert communication with a positive rate considering the distribution of noise uncertainty, instead of the worst-case detection performance a Willie.

In the realm of covert communication, majority of the recent work characterizes the achievable covert throughput in the form of a scaling law, but the exact achievable rates have not been quantified. We focus on this aspect in this work. Specifically, we consider hiding the communication to a covert user, using the communication to a legitimate user as a cover. This scenario has not been studied before in the context of wireless networks, though it is very much in line with standard covert operations, where a public action is used
to provide cover for a secret action.  Our main contributions are as follows:
\begin{itemize}
\item {We exploit the uncertainty in channel knowledge under block fading channels to achieve covertness.}
\item {We derive the exact expression for the optimal threshold of warden's detector (radiometer).}
\item {Under the constraints required for gaining covertness, we analyze the feasible rates for given transmission outage constraints of the legitimate and the covert user.}
\end{itemize}

This paper is organized as follows: Section II details our system model, assumptions and the notation used throughout the paper. Section III discusses the warden's approach to detection of covert communication, while Section IV quantifies the achievable performance of our covert communication system. We present numerical results in Section V, and finally the paper is concluded in Section VI.

\section{System Model}
\begin{figure}[t!]
	\centering
	\includegraphics[scale=0.75]{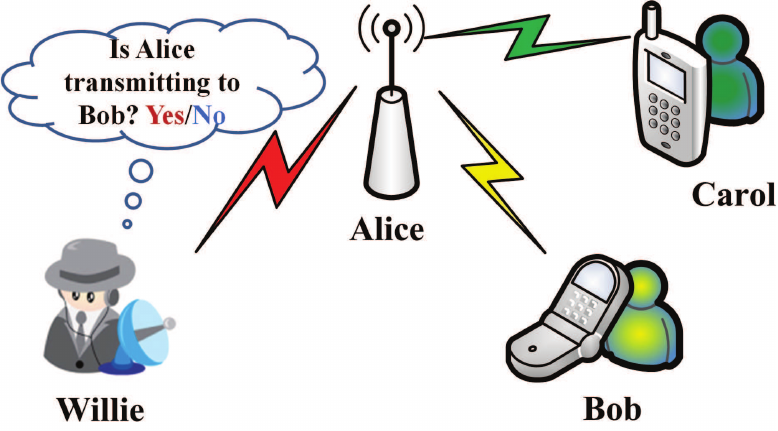}
	\caption{Illustration of the Covert Communication Scenario}
	\label{fig1}
\vspace{-0.3cm}
\end{figure}
We consider a scenario, as shown in Fig. \ref{fig1}, where the transmitter (Alice) openly transmits to the legitimate user (Carol), all the time. Alice also wants to transmit to the covert user (Bob), but she wants to hide this communication from the warden (Willie), using the transmission to Carol as her cover. Willie, being passive, silently observes the communication environment, and tries to detect whether Alice is also transmitting to Bob. It is assumed that Willie knows the transmit power used by Alice, and adopts a radiometer (power detector) as his detector. The distances from Alice-to-Carol, Alice-to-Bob, and Alice-to-Willie are denoted by $d_{ac}$, $d_{ab}$ and $d_{aw}$, respectively, and each user is equipped with a single antenna.

When Alice communicates with Carol or Bob, she transmits her message by mapping it to the sequence $\bm{x}_c=[x^1_c, x^2_c, \dots, x^n_c]$ or $\bm{x}_b=[x^1_b, x^2_b, \dots, x^n_b]$, respectively, where $n$ is the number of channel uses. The average power per symbol in $\bm{x}_c$ and $\bm{x}_b$ is normalized to $1$. Alice employs zero mean Gaussian signalling with variances (i.e., transmit powers) $P_{ac}$ and $P_{ab}$ for Carol and Bob's transmission, respectively. It should be noted here that Alice uses a constant transmit power to Carol, as Carol is unaware of any covert transmission from Alice, and expects a known power at her receiving terminal.

\subsection{Channel Model}
The effect of fading between Alice and user $k$ is modelled by a fading coefficient $h_{ak}$, where $k$ is either $b$ (Bob), $c$ (Carol) or $w$ (Willie). Here $h_{ak}$ follows a circularly symmetric complex Gaussian (CSCG) distribution with zero mean and unit variance, i.e., $h_{ak} \sim \mathcal{CN}(0,1)$. We consider block fading channels, hence the fading coefficients remain constant in one block and change independently from one block to another. We adopt a commonly-used assumption that transmission of a message is completed within one block, i.e., quasi-static fading channels are considered, and the block boundaries are synchronized among all the users. Due to the independent change of fading coefficients among blocks, we focus our analysis on one given block, as the knowledge of previous blocks does not help Willie in improving his detection performance.

While transmitting continuously to Carol, Alice potentially transmits to Bob in a given block. Alice and Bob have a pre-shared secret which enables Bob to know beforehand the block chosen by Alice. Analyzing his observations for a given block, Willie has to decide whether Alice also covertly transmitted to Bob. The null hypothesis $(H_0)$ states that Alice did not talk to Bob, while the alternative hypothesis $(H_1)$ states that Alice did talk to Bob. The signal vector received at user $k$ is
\begin{equation} \label{eq1}
\bm{y}_k = \begin{cases}
\frac{h_{ak}\sqrt{P_{ac}} \bm{x}_c}{d_{ak}^{\alpha/2}} + \frac{h_{ak}\sqrt{P_{ab}}\bm{x}_b}{d_{ak}^{\alpha/2}} +\bm{v}_k, & \text{if $H_1$ is true} \\
\frac{h_{ak}\sqrt{P_{ac}}\bm{x}_c}{d_{ak}^{\alpha/2}} + \bm{v}_k, & \text{if $H_0$ is true}
\end{cases}
\end{equation}
where $\alpha$ is the path-loss exponent, $\bm{v}_k \sim\mathcal{CN}(\bm{0}, \sigma^{2}_{k}\bm{I}_n)$ represents the user $k$'s receiver noise vector, the elements of which follow a CSCG distribution with zero mean and variance $\sigma^{2}_{k}$. Here, $\bm{I}_n$ represents an $n \times n$ identity matrix.

Considering channel uncertainty, the channel coefficient $h_{ak}$ is given by \cite{biao_on_off_13,vakili_error_6}
\begin{equation} \label{eq2}
h_{ak}= {\hat h}_{ak} + \tilde{h}_{ak},
\end{equation}
where ${\hat h}_{ak}$ and $\tilde{h}_{ak}$ represent the known part and the uncertain part of $h_{ak}$ at the corresponding receiver, respectively, and they are zero-mean, independent, CSCG random variables. The variance of the channel uncertainty for user $k$ is denoted by $\beta_k = \mathbb{E}[|\tilde{h}_{ak}|^2]$, $0 \leq \beta_k \leq 1$, and provides a measure of channel uncertainty at user $k$. Accordingly, the variance of $\hat{h}_{ak}$ is $1-\beta_k$, since the variance of $h_{ak}$ is 1.

\subsection{Willie's Detection and Covert Requirement}
Based on his observation vector, $\bm{y}_w$, for one block, Willie has to decide between the hypotheses, $H_0$ and $H_1$, regarding Alice's communication to Bob. Willie knows the complete statistics of its observations under both hypotheses and uses classical hypothesis testing, where the \textit{a priori} probability of either hypothesis being true is $0.5$ \cite{goeckel_fading}. Willie's decision in favor of $H_1$, when $H_0$ is true, is a False Alarm (or type I error), and its probability is denoted by $\mathbb{P}_{FA}$. Similarly, if Willie decides on $H_0$, while $H_1$ is true, then it is a Missed Detection (or type II error), with the probability denoted by $\mathbb{P}_{MD}$. We consider Alice achieving covert communication if, for any $\epsilon > 0$, a communication scheme exists so that $\mathbb{P}_{FA}+\mathbb{P}_{MD} \geq 1-\epsilon$, as $n \rightarrow \infty$ \cite{goeckel_fading}. Here $\epsilon$ signifies the covert requirement, since a sufficiently small $\epsilon$ renders any detector employed at Willie to be ineffective \cite{bash_jsac}.

\section{Detection Strategy at Willie}
From the independent and identically distributed (i.i.d.) nature of Willie's received vector $\bm{y}_w$, given in (\ref{eq1}), each element (symbol) of $\bm{y}_w$ i.e., $y_w^i$ has a distribution given by
\begin{equation} \label{eq3}
\begin{cases}
\mathcal{CN} (0, |\hat{h}_{aw}|^2\zeta_1 + |\tilde{h}_{aw}|^2 \zeta_1 +\sigma_w^2),  & \text{if $H_1$ is true}            \\
\mathcal{CN} (0, |\hat{h}_{aw}|^2 \zeta_0 + |\tilde{h}_{aw}|^2 \zeta_0 + \sigma_w^2), & \text{if $H_0$ is true}
\end{cases}
\end{equation}
where $\zeta_0 \triangleq \frac{P_{ac}}{d_{aw}^{\alpha}}$ and $\zeta_1 \triangleq \frac{P_{ac}+P_{ab}}{d_{aw}^{\alpha}}$. By application of the Neyman-Pearson criterion, the optimal approach for Willie to minimize his detection error is to use the following \textit{likelihood ratio test} \cite{levy},
\begin{equation} \label{eq4}
\Lambda(\bm{y}_w) = \frac{f_{\bm{y}_w|{\hat h}_{aw}, H_1}(\bm{y}_w|{\hat h}_{aw}, H_1)}{f_{\bm{y}_w|{\hat h}_{aw}, H_0}(\bm{y}_w|{\hat h}_{aw}, H_0)} \underset{D_0}{\overset{D_1}{\gtrless}} \Upsilon ,
\end{equation}
where $\Upsilon = 1$ due to the assumption of equal \textit{a priori} probabilities of each hypothesis. Here, $D_1$ and $D_0$ correspond to a decision in favor of hypothesis $H_1$ and $H_0$, and $f_{\bm{y}_w|{\hat h}_{aw}, H_1}(\bm{y}_w| {\hat h}_{aw}, H_1)$ and $f_{\bm{y}_w|{\hat h}_{aw}, H_0}(\bm{y}_w|{\hat h}_{aw}, H_0)$ are the likelihood functions of Willie's observation vectors for the considered block, under hypothesis $H_1$ and $H_0$, respectively. It should be noted here that the ambiguity at Willie under the two hypotheses comes from two factors, namely, the receiver noise, and the channel uncertainty.

\subsection{Detection using a Radiometer}
We first substantiate that radiometer is indeed the optimal detector for Willie in our system model. We then derive the optimal threshold of the radiometer that minimizes the detection error at Willie.
\begin{lemma}
Under the considered system model, the optimal decision rule that minimizes the detection error at Willie is
\begin{equation} \label{lemma1}
\frac{P_w}{n} \underset{D_0}{\overset{D_1}{\gtrless}} \lambda ,
\end{equation}
which corresponds to a threshold test on $P_w$, where $P_w=\sum_{i=1}^{n}|y_w^i|^2$ is the total power received by Willie in a given block. Here, $\lambda$ is the chosen threshold,  and $n$ is the number of channel uses in a block.
\end{lemma}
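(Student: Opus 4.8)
The plan is to start directly from the likelihood-ratio test in \eqref{eq4} and show that, after simplification, it collapses to a one-sided threshold test on the statistic $P_w=\sum_{i=1}^{n}|y_w^i|^2$. Since \eqref{eq3} already establishes that the symbols $y_w^i$ are i.i.d.\ and, under each hypothesis, zero-mean CSCG, I would first write the two likelihood functions as products of the scalar CSCG densities. Abbreviating the two per-symbol variances as $\tau_1\triangleq(|\hat h_{aw}|^2+|\tilde h_{aw}|^2)\zeta_1+\sigma_w^2$ and $\tau_0\triangleq(|\hat h_{aw}|^2+|\tilde h_{aw}|^2)\zeta_0+\sigma_w^2$, and using that a $\mathcal{CN}(0,\tau)$ variable has density $\tfrac{1}{\pi\tau}e^{-|y|^2/\tau}$, the conditional likelihoods factor as $f_{\bm y_w|\hat h_{aw},H_j}=(\pi\tau_j)^{-n}\exp(-P_w/\tau_j)$ for $j\in\{0,1\}$, where the i.i.d.\ assumption lets the exponents add up to $-P_w/\tau_j$.

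Next I would form the log-likelihood ratio and observe that it is affine in $P_w$. Taking logarithms in \eqref{eq4} gives $\ln\Lambda(\bm y_w)=n\ln(\tau_0/\tau_1)-P_w\bigl(\tfrac{1}{\tau_1}-\tfrac{1}{\tau_0}\bigr)$, so the entire dependence of the test on the observation is through the single scalar $P_w$. The test $\Lambda\gtrless\Upsilon=1$ is therefore equivalent to comparing a monotone function of $P_w$ against a constant, and all the terms that do not involve $P_w$ (the channel gains and the noise variance) can be gathered into a single threshold.

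To fix the direction of the inequality I would argue that $\tau_1>\tau_0$. This holds because $\zeta_1-\zeta_0=P_{ab}/d_{aw}^{\alpha}>0$ whenever Alice actually transmits to Bob, so $\tfrac{1}{\tau_1}-\tfrac{1}{\tau_0}<0$ and the coefficient of $P_w$ in $\ln\Lambda$ is strictly positive; hence $\ln\Lambda$ is strictly increasing in $P_w$. Consequently $\Lambda\gtrless 1$ is equivalent to $P_w\gtrless(\text{const})$, and dividing by $n$ yields $\tfrac{P_w}{n}\gtrless\lambda$ with $\lambda=\tfrac{\tau_0\tau_1}{\tau_1-\tau_0}\ln(\tau_1/\tau_0)$, which is exactly the radiometer rule \eqref{lemma1}. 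Because the map from the likelihood ratio to $P_w$ is monotone and one-to-one, the sufficiency of $P_w$ and the optimality of the radiometer follow from the optimality of the LRT under the equal-prior minimum-error criterion.

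The main obstacle I anticipate is not the algebra but the bookkeeping around what Willie conditions on: the variances $\tau_0,\tau_1$ — and hence the exact value of $\lambda$ — depend on $\hat h_{aw}$, $\tilde h_{aw}$ and $\sigma_w^2$. The clean way to handle this is to read the statement as ``for the per-symbol variances induced by \eqref{eq3}, the minimum-error test is a threshold on $P_w/n$,'' so that the qualitative conclusion (compare received power to a threshold, deciding $H_1$ when it is exceeded) is invariant, while the explicit dependence of $\lambda$ on the channel and noise parameters is deferred to the subsequent derivation of the optimal threshold.
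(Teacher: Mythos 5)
Your derivation proves the lemma for a different, easier problem than the one the paper poses. You condition the likelihoods on both $\hat h_{aw}$ \emph{and} $\tilde h_{aw}$, so that each hypothesis corresponds to a single known-variance Gaussian and the log-likelihood ratio is affine in $P_w$. But the test in \eqref{eq4} conditions only on $\hat h_{aw}$: the uncertain part $\tilde h_{aw}$ is by assumption unknown to Willie, and this is precisely one of the two sources of his detection ambiguity that the paper identifies. The correct likelihood $f_{\bm y_w|\hat h_{aw},H_j}$ is the expectation of your product density over the distribution of $|\tilde h_{aw}|^2$, i.e.\ a scale mixture of Gaussians. Your closing paragraph acknowledges that $\lambda=\tfrac{\tau_0\tau_1}{\tau_1-\tau_0}\ln(\tau_1/\tau_0)$ depends on $\tilde h_{aw}$, but the proposed fix (read the threshold as parameter-dependent and defer its value) does not repair the logic: Willie cannot implement a test whose threshold depends on a quantity he does not observe, so the object you have shown to be optimal is not a realizable detector in this model.

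Under the correct conditioning, the sufficiency of $P_w$ does survive (the mixture density depends on $\bm y_w$ only through $\sum_i|y_w^i|^2$, by the same factorization you use inside the expectation). What does \emph{not} follow from your affine-in-$P_w$ argument is that the marginal likelihood ratio is monotone in $P_w$, which is what makes the optimal test a one-sided threshold test rather than, say, an interval test. Establishing that monotonicity for the mixture densities is exactly where the stochastic-ordering machinery enters, and it is the content of the result the paper cites: its own proof of this lemma is a one-line appeal to Lemma 2 of the Sobers et al.\ reference, which shows via stochastic orders that the radiometer is optimal under block fading. To complete your route you would need to show that the family $\{f_{\bm y_w|\hat h_{aw},H_0},f_{\bm y_w|\hat h_{aw},H_1}\}$ has the monotone likelihood ratio property in $P_w$ --- for instance by exploiting that the random per-symbol variance under $H_1$ is a deterministic scaling (by $\zeta_1/\zeta_0>1$) of a shifted version of that under $H_0$ --- rather than asserting it from the unconditional Gaussian case.
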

\begin{proof}
The proof follows along the same lines as the proof of Lemma $2$ in \cite{goeckel_fading}, where it has been shown using the concepts of stochastic ordering \cite{stoch_order} that a radiometer is optimal for Willie under block fading channels.
\end{proof}

\subsection{Optimal Threshold for Willie's Radiometer}
After establishing the fact that the optimal strategy for Willie is to employ a radiometer, we next evaluate the optimal setting of his radiometer's threshold.
\begin{thm} Using a radiometer for detecting Alice-Bob covert transmission, the optimal value of threshold for Willie's detector is
\begin{equation} \label{eq5}
\begin{aligned}
\lambda^* = \begin{cases}
\lambda^{\dagger}, &\text{if} \quad |\hat{h}_{aw}|^2 < \frac{\lambda^{\dagger} - \sigma_w^2}{\zeta_1}  \\
|\hat{h}_{aw}|^2\zeta_1 + \sigma_w^2, &\text{otherwise}
\end{cases}
\end{aligned}
\end{equation}
where $\lambda^{\dagger} = \frac{\zeta_1 \zeta_0 \beta_w}{\zeta_1 - \zeta_0} \log \left[\frac{\zeta_1}{\zeta_0} \exp \left(\frac{(\zeta_1 - \zeta_0)\sigma_w^2}{\zeta_1 \zeta_0 \beta_w}   \right) \right]$, and $\hat{h}_{aw}$ is Willie's known part of his channel from Alice.
\end{thm}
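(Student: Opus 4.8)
The plan is to pass to the large-$n$ limit, reduce Willie's problem to a one-dimensional threshold test on a single random variable, minimise the total error by calculus, and then resolve which branch is active by a boundary check.

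First I would determine the distribution of Willie's decision statistic $T\triangleq P_w/n$. By Lemma~1 the optimal statistic is $P_w/n$, and from (\ref{eq3}) each $y_w^i$ is conditionally $\mathcal{CN}$ with per-symbol variance $|\hat h_{aw}|^2\zeta_j+|\tilde h_{aw}|^2\zeta_j+\sigma_w^2$ under $H_j$. For fixed channel realisations the strong law of large numbers drives $P_w/n$ to this variance as $n\to\infty$, so the only residual randomness seen by Willie is the uncertain component $\tilde h_{aw}$, which he does not know. Since $\tilde h_{aw}\sim\mathcal{CN}(0,\beta_w)$, the quantity $|\tilde h_{aw}|^2$ is exponential with mean $\beta_w$, and hence under $H_j$ the statistic is $T=a_j+\zeta_j|\tilde h_{aw}|^2$ with $a_j\triangleq|\hat h_{aw}|^2\zeta_j+\sigma_w^2$; that is, $T$ has the shifted exponential density $\tfrac{1}{\zeta_j\beta_w}\exp\!\big(-(t-a_j)/(\zeta_j\beta_w)\big)$ on $[a_j,\infty)$.

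Next I would write the two error probabilities for the threshold rule that declares $H_1$ when $T\ge\lambda$. Integrating the two densities gives $\mathbb{P}_{FA}=\exp(-(\lambda-a_0)/(\zeta_0\beta_w))$ for $\lambda>a_0$ (and $1$ otherwise) and $\mathbb{P}_{MD}=1-\exp(-(\lambda-a_1)/(\zeta_1\beta_w))$ for $\lambda>a_1$ (and $0$ otherwise). Because $\zeta_1>\zeta_0$ forces $a_1>a_0$, the objective $\xi(\lambda)\triangleq\mathbb{P}_{FA}+\mathbb{P}_{MD}$ is piecewise: it equals $1$ on $\lambda\le a_0$, a strictly decreasing single exponential on $(a_0,a_1]$, and a sum of two exponentials on $(a_1,\infty)$. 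The strict decrease on the middle piece, together with $\xi\equiv 1$ on the leftmost piece, already rules out any minimiser below $a_1$. I would then minimise $\xi$ on $(a_1,\infty)$: setting $\xi'(\lambda)=0$ gives $\tfrac{1}{\zeta_0}\exp(-(\lambda-a_0)/(\zeta_0\beta_w))=\tfrac{1}{\zeta_1}\exp(-(\lambda-a_1)/(\zeta_1\beta_w))$, and taking logarithms turns this into a \emph{linear} equation in $\lambda$ with a unique root $\lambda^\dagger$. Substituting $a_1-a_0=(\zeta_1-\zeta_0)|\hat h_{aw}|^2$ cancels the $|\hat h_{aw}|^2$ terms and reproduces exactly the stated closed form for $\lambda^\dagger$. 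Since the $\zeta_0$-exponential decays faster (as $\zeta_0<\zeta_1$), $\xi'>0$ for large $\lambda$, so this unique stationary point is a minimum.

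Finally I would perform the boundary check that produces the two branches. The interior root $\lambda^\dagger$ is admissible only when it actually lies in $(a_1,\infty)$, i.e.\ $\lambda^\dagger>a_1$; dividing the equivalent inequality $\lambda^\dagger-\sigma_w^2>|\hat h_{aw}|^2\zeta_1$ by $\zeta_1$ yields precisely the stated condition $|\hat h_{aw}|^2<(\lambda^\dagger-\sigma_w^2)/\zeta_1$, in which case $\lambda^*=\lambda^\dagger$. When instead $\lambda^\dagger\le a_1$, the function $\xi$ is already increasing throughout $(a_1,\infty)$, so combined with its decrease on $(a_0,a_1]$ the global minimum is attained at the non-smooth kink $\lambda=a_1=|\hat h_{aw}|^2\zeta_1+\sigma_w^2$, which is the second branch. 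I expect the main obstacle to be exactly this last step: recognising that the optimum may sit at the boundary $a_1$ rather than at the smooth stationary point, and verifying that this boundary condition is equivalent to the case split stated in the theorem.
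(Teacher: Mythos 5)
Your proposal is correct and follows essentially the same route as the paper's proof: pass to the large-$n$ limit so that $P_w/n$ becomes a shifted exponential in $|\tilde h_{aw}|^2$, write $\mathbb{P}_{FA}+\mathbb{P}_{MD}$ as a three-piece function of $\lambda$, solve the first-order condition on the rightmost piece to obtain $\lambda^\dagger$ (with the $|\hat h_{aw}|^2$ terms cancelling), and handle the case $\lambda^\dagger\le|\hat h_{aw}|^2\zeta_1+\sigma_w^2$ by taking the kink point as the minimiser. Your boundary analysis and the monotonicity argument for why the kink is optimal in the second branch match the paper's Case I--III treatment.
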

\begin{proof}
To find the optimal threshold, we consider the optimization problem
\begin{equation} \label{eq6}
\underset{\lambda}{\min} \quad \mathbb{P}_{FA}+\mathbb{P}_{MD}.
\end{equation}
From Lemma $1$, the decision at Willie's detector regarding Alice's transmission to Bob is given by (\ref{lemma1}), where $P_w$ is a sufficient statistic for Willie's detector test. The probabilities of detection error at Willie are given by
\begin{equation}\label{eq8}
\begin{aligned}
\mathbb{P}_{FA} &= \mathbb{P} \left[P_w/n > \lambda \: | \: H_0  \right] \\
&= \mathbb{P}\left[ (\sigma_w^2+|\hat{h}_{aw}|^2\zeta_0+ |\tilde{h}_{aw}|^2\zeta_0)\frac{\chi_{2n}^2}{n} > \lambda  \right],
\end{aligned}
\end{equation}
and
\begin{equation}\label{eq9}
\begin{aligned}
\mathbb{P}_{MD} &= \mathbb{P} \left[P_w/n < \lambda \: | \: H_1  \right]   \\
&= \mathbb{P}\left[ (\sigma_w^2+|\hat{h}_{aw}|^2\zeta_1+ |\tilde{h}_{aw}|^2\zeta_1)\frac{\chi_{2n}^2}{n} < \lambda   \right],
\end{aligned}
\end{equation}
where $\chi_{2n}^{2}$ represents a chi-squared random variable with $2n$ degrees of freedom. From the Strong Law of Large Numbers, we know that $\chi_{2n}^2/n$ converges to $1$, almost surely. The Lebesgue's Dominated Convergence Theorem \cite{math_anlys} allows us to directly replace $\chi_{2n}^2/n$ by $1$, as $n \rightarrow \infty $. Thus for a given realization of $\hat{h}_{aw}$, $\mathbb{P}_{FA}$ and $\mathbb{P}_{MD}$ can be written as
\small
\begin{equation} \label{eq10}
\begin{aligned}
\mathbb{P}_{FA} &= \mathbb{P}\left[ (\sigma_w^2+|\hat{h}_{aw}|^2\zeta_0+ |\tilde{h}_{aw}|^2\zeta_0) > \lambda  \right] \\
&= \mathbb{P}\left[|\tilde{h}_{aw}|^2 > \frac{\lambda - \sigma_w^2 - |\hat{h}_{aw}|^2\zeta_0 }{\zeta_0}   \right] \\
&= \begin{cases}
\exp \left( \frac{|\hat{h}_{aw}|^2\zeta_0 + \sigma_w^2 - \lambda}{\zeta_0\beta_w} \right), & \text{if} \quad  \frac{\lambda - \sigma_w^2 - |\hat{h}_{aw}|^2\zeta_0 }{\zeta_0} \geq 0 \\
1, & \text{otherwise}
\end{cases}
\end{aligned}
\end{equation}
\normalsize
and
\small
\begin{equation} \label{eq11}
\begin{aligned}
\mathbb{P}_{MD} &= \mathbb{P}\left[ (\sigma_w^2+|\hat{h}_{aw}|^2\zeta_1+ |\tilde{h}_{aw}|^2\zeta_1) < \lambda  \right] \\
&= \mathbb{P}\left[|\tilde{h}_{aw}|^2 < \frac{\lambda - \sigma_w^2 - |\hat{h}_{aw}|^2\zeta_1 }{\zeta_1}   \right] \\
&= \begin{cases}
1 - \exp \left( \frac{|\hat{h}_{aw}|^2\zeta_1 + \sigma_w^2 - \lambda}{\zeta_1\beta_w} \right), & \text{if} \quad  \frac{\lambda - \sigma_w^2 - |\hat{h}_{aw}|^2\zeta_1 }{\zeta_1} \geq 0 \\
0, & \text{otherwise}.
\end{cases}
\end{aligned}
\end{equation}
\normalsize
Following (\ref{eq10}) and (\ref{eq11}), we have
\small
\begin{equation} \label{eq12}
\begin{aligned}
 &\mathbb{P}_{FA}+\mathbb{P}_{MD} =
\begin{cases}
1,   & \text{if} \quad \lambda < |\hat{h}_{aw}|^2\zeta_0 + \sigma_w^2 \\
\kappa_0, & \text{if} \quad |\hat{h}_{aw}|^2\zeta_0 + \sigma_w^2 \leq \lambda \leq |\hat{h}_{aw}|^2\zeta_1 + \sigma_w^2 \\
\kappa,  & \text{if} \quad \lambda > |\hat{h}_{aw}|^2\zeta_1 + \sigma_w^2 \\
\end{cases}
\end{aligned}
\end{equation}
\normalsize
where $\kappa = 1-\kappa_1+\kappa_0$, $\kappa_0 \triangleq \exp \left( \frac{|\hat{h}_{aw}|^2\zeta_0 + \sigma_w^2 - \lambda}{\zeta_0\beta_w} \right)$ and $\kappa_1 \triangleq \exp \left( \frac{|\hat{h}_{aw}|^2\zeta_1 + \sigma_w^2 - \lambda}{\zeta_1\beta_w} \right)$. We next analyze the three possible cases in (\ref{eq12}) separately, and find the optimal value of $\lambda$ that minimizes $\mathbb{P}_{FA}+\mathbb{P}_{MD}$.

\subsection*{Case I : $\lambda < |\hat{h}_{aw}|^2\zeta_0 + \sigma_w^2$}
As long as $\lambda < |\hat{h}_{aw}|^2\zeta_0 + \sigma_w^2$, $\mathbb{P}_{FA}+\mathbb{P}_{MD}=1$, and cannot be minimized.

\subsection*{Case II : $|\hat{h}_{aw}|^2\zeta_0 + \sigma_w^2 \leq \lambda \leq |\hat{h}_{aw}|^2\zeta_1 + \sigma_w^2$}
Here, $\mathbb{P}_{FA}+\mathbb{P}_{MD}$ is a decreasing function of $\lambda$, hence Willie chooses the highest possible value of $\lambda$, which is $|\hat{h}_{aw}|^2\zeta_1 + \sigma_w^2$, leading to $\mathbb{P}_{FA}+\mathbb{P}_{MD} = \exp \left(\frac{|\hat{h}_{aw}|^2 (\zeta_0-\zeta_1)}{\zeta_0 \beta_w}\right)$.

\subsection*{Case III : $\lambda > |\hat{h}_{aw}|^2\zeta_1 + \sigma_w^2$}
In order to determine the optimal value of $\lambda$ in this case, we set the first derivative of $\mathbb{P}_{FA}+\mathbb{P}_{MD}$ w.r.t. $\lambda$ equal to zero, which results in
\begin{equation} \label{eq13}
\begin{aligned}
\frac{\partial (\mathbb{P}_{FA}+\mathbb{P}_{MD})}{\partial \lambda} &= \frac{1}{\zeta_1\beta_w} \exp\left(\frac{|\hat{h}_{aw}|^2\zeta_1 + \sigma_w^2 - \lambda}{\zeta_1\beta_w}  \right) \\ &- \frac{1}{\zeta_0 \beta_w} \exp \left( \frac{|\hat{h}_{aw}|^2\zeta_0 + \sigma_w^2 - \lambda}{\zeta_0\beta_w} \right) = 0.
\end{aligned}
\end{equation}
After a few simple manipulations, the optimal value of $\lambda$ in this case is given by
\begin{equation}\label{eq14}
\lambda^{\dagger} \triangleq \frac{\zeta_1 \zeta_0 \beta_w}{\zeta_1 - \zeta_0} \log \left[\frac{\zeta_1}{\zeta_0} \exp \left(\frac{(\zeta_1 - \zeta_0)\sigma_w^2}{\zeta_1 \zeta_0 \beta_w}   \right) \right].
\end{equation}
We note that $\lambda^{\dagger}$ is independent of the channel realization $\hat{h}_{aw}$, and represents the inflection point of $\mathbb{P}_{FA}+\mathbb{P}_{MD}$. It can be verified through simple calculations that $\frac{\partial (\mathbb{P}_{FA}+\mathbb{P}_{MD})}{\partial \lambda} > 0$ for $\lambda > \lambda^{\dagger}$, and $\frac{\partial (\mathbb{P}_{FA}+\mathbb{P}_{MD})}{\partial \lambda} < 0$ for $\lambda < \lambda^{\dagger}$. The second derivative of $\mathbb{P}_{FA}+\mathbb{P}_{MD}$ w.r.t $\lambda$ is
\begin{equation} \label{eq15}
\begin{aligned}
\frac{\partial^2 (\mathbb{P}_{FA}+\mathbb{P}_{MD})}{\partial \lambda^2} = &-\frac{1}{\zeta_1^2\beta_w^2} \exp\left(\frac{|\hat{h}_{aw}|^2\zeta_1 + \sigma_w^2 - \lambda}{\zeta_1\beta_w}  \right) \\ &+ \frac{1}{\zeta_0^2 \beta_w^2} \exp \left( \frac{|\hat{h}_{aw}|^2\zeta_0 + \sigma_w^2 - \lambda}{\zeta_0\beta_w} \right),
\end{aligned}
\end{equation}
which is strictly positive as long as the chosen $\lambda^{\dagger}$ satisfies
\begin{equation}\label{eq16}
\lambda^{\dagger} < \frac{\zeta_1 \zeta_0 \beta_w}{\zeta_1 - \zeta_0} \log \left[\frac{\zeta_1^2}{\zeta_0^2} \exp \left(\frac{(\zeta_1 - \zeta_0)\sigma_w^2}{\zeta_1 \zeta_0 \beta_w} \right)\right],
\end{equation}
 where the requirement in (\ref{eq16}) follows by simply considering the fact that $\zeta_1 > \zeta_0$. Thus $\lambda^{\dagger}$ represents the optimal threshold value for Willie, as long as it satisfies the condition $\lambda^{\dagger} > |\hat{h}_{aw}|^2\zeta_1 + \sigma_w^2$. If $\lambda^{\dagger}$ does not satisfy this, then using the monotonic increase in $\mathbb{P}_{FA}+\mathbb{P}_{MD}$ for $\lambda > \lambda^{\dagger}$, the minimum value of $\lambda$ is chosen that satisfies $\lambda \geq |\hat{h}_{aw}|^2\zeta_1 + \sigma_w^2$.
\end{proof}

\section{Performance of Covert Communication}
Knowing the best detection at Willie, we now consider the overall performance of the covert communication system. We first derive Willie's average detection error probability from Alice's perspective, which will be used to quantify the covertness. Next, we derive the communication outage probabilities at Carol and Bob, which are used to determine the feasible regime of the transmission rates.
\subsection{Average Detection Error Probability}
Using the optimal value of $\lambda$ from (\ref{eq5}), we have
\begin{equation} \label{eq17}
\begin{aligned}
\mathbb{P}_{FA}+\mathbb{P}_{MD} =
\begin{cases}
 1 - \kappa_1^{\dagger} + \kappa_0^{\dagger}, & \text{if} \quad |\hat{h}_{aw}|^2 < \frac{\lambda^{\dagger} - \sigma_w^2}{\zeta_1} \\
 \kappa^{\dagger}, & \text{otherwise}
\end{cases}
\end{aligned}
\end{equation}
where \small $\kappa^{\dagger} \triangleq \exp \left( \frac{|\hat{h}_{aw}|^2(\zeta_0-\zeta_1)}{\zeta_0\beta_w} \right)$, $\kappa_1^{\dagger} \triangleq \exp \left( \frac{|\hat{h}_{aw}|^2\zeta_1 + \sigma_w^2 - \lambda^{\dagger}}{\zeta_1\beta_w} \right)$  \normalsize and \small $\kappa_0^{\dagger} \triangleq \exp \left( \frac{|\hat{h}_{aw}|^2\zeta_0 + \sigma_w^2 - \lambda^{\dagger}}{\zeta_0\beta_w} \right)$. \normalsize Since $\hat{h}_{aw}$ is unknown to Alice, she has to rely on the average measure of Willie's performance to assess the possible covertness. We use $\overline{\mathbb{P}}_{E}^w$ to denote the average $\mathbb{P}_{FA}+\mathbb{P}_{MD}$ over all realizations of $\hat{h}_{aw}$.
\begin{prop}
The average detection error probability at Willie is
\begin{equation} \label{eq18}
\begin{split}
&\overline{\mathbb{P}}_{E}^w = \left[1 - \exp\left(\frac{\sigma_w^2 - \lambda^{\dagger}}{(1-\beta_w)\zeta_1} \right)\right] \\
&\biggl [1-\frac{\beta_w}{2\beta_w -1}\exp\left(\frac{\sigma_w^2 - \lambda^{\dagger}}{\beta_w\zeta_1}\right)
+ \frac{\beta_w}{2\beta_w -1}\exp\left(\frac{\sigma_w^2 - \lambda^{\dagger}}{\beta_w\zeta_0}\right) \biggl ] \\
&+ \exp\left(\frac{\sigma_w^2 - \lambda^{\dagger}}{(1-\beta_w)\zeta_1} \right) \left[\frac{\zeta_0\beta_w}{(1-\beta_w)\zeta_1 + (2\beta_w-1)\zeta_0}   \right].
\end{split}
\end{equation}
\end{prop}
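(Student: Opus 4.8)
The plan is to obtain $\overline{\mathbb{P}}_E^w$ as the expectation of the conditional error sum in (\ref{eq17}) over the randomness in $\hat h_{aw}$. Since $\hat h_{aw}\sim\mathcal{CN}(0,1-\beta_w)$, the variable $x\triangleq|\hat h_{aw}|^2$ is exponentially distributed with mean $1-\beta_w$, so its density is $f(x)=\frac{1}{1-\beta_w}\exp(-x/(1-\beta_w))$ for $x\geq 0$. I would therefore write
\begin{equation*}
\overline{\mathbb{P}}_E^w=\int_0^{x_0}\!\big(1-\kappa_1^\dagger+\kappa_0^\dagger\big)f(x)\,dx+\int_{x_0}^{\infty}\!\kappa^\dagger f(x)\,dx,
\end{equation*}
where the breakpoint $x_0=(\lambda^\dagger-\sigma_w^2)/\zeta_1$ is exactly the threshold separating the two branches of (\ref{eq17}), and $\kappa^\dagger,\kappa_0^\dagger,\kappa_1^\dagger$ are regarded as functions of $x$ through their definitions below (\ref{eq17}). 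Note $x_0>0$, since the explicit $\lambda^\dagger$ gives $\lambda^\dagger-\sigma_w^2>0$ when $\zeta_1>\zeta_0$, so the split is valid on $[0,\infty)$.

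Next I would evaluate the four elementary pieces. Each of $\kappa_0^\dagger(x)$, $\kappa_1^\dagger(x)$, $\kappa^\dagger(x)$ is an exponential $e^{cx}$ times a constant, so multiplying by $f(x)$ reduces everything to $\int e^{\gamma x}\,dx$. The constant piece gives $\int_0^{x_0}f=1-\exp((\sigma_w^2-\lambda^\dagger)/((1-\beta_w)\zeta_1))$, which produces the outer factor $[1-\exp(\cdots)]$. For the two finite integrals the combined exponent coefficient is $a=\frac{1}{\beta_w}-\frac{1}{1-\beta_w}=\frac{1-2\beta_w}{\beta_w(1-\beta_w)}$, and since $\frac{1}{(1-\beta_w)a}=\frac{\beta_w}{1-2\beta_w}=-\frac{\beta_w}{2\beta_w-1}$ this is precisely the recurring prefactor in (\ref{eq18}). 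For the improper integral the coefficient $\frac{\zeta_0-\zeta_1}{\zeta_0\beta_w}-\frac{1}{1-\beta_w}$ is strictly negative because $\zeta_1>\zeta_0$, which guarantees convergence and, after clearing fractions, yields the denominator $(1-\beta_w)\zeta_1+(2\beta_w-1)\zeta_0$.

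The key simplification I expect to rely on is the explicit optimal threshold, namely $\lambda^\dagger-\sigma_w^2=\frac{\zeta_0\zeta_1\beta_w}{\zeta_1-\zeta_0}\log(\zeta_1/\zeta_0)$ read off from (\ref{eq14}). Substituting this at the upper limit $x_0$ is what collapses the boundary exponentials onto the common factor $\exp((\sigma_w^2-\lambda^\dagger)/((1-\beta_w)\zeta_1))$ and lets the region-I and region-II contributions be regrouped into the bracketed terms of (\ref{eq18}); in particular the combination $A_1+ax_0$ for the $\kappa_1^\dagger$ integral telescopes to $(\sigma_w^2-\lambda^\dagger)/((1-\beta_w)\zeta_1)$. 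I would then collect the constant together with $\exp((\sigma_w^2-\lambda^\dagger)/(\beta_w\zeta_1))$ and $\exp((\sigma_w^2-\lambda^\dagger)/(\beta_w\zeta_0))$ into the second bracket, and the remaining exponential-weighted tail into the last summand.

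The main obstacle will be purely the bookkeeping: there are four exponential integrals whose arguments must be summed and reduced, and the algebra only telescopes cleanly after the explicit $\lambda^\dagger$ is inserted, so an error in any single exponent coefficient propagates into the wrong grouping. I would also treat the degenerate case $\beta_w=\tfrac12$ separately, where $a=0$, the finite integrals become linear in $x_0$, and the factor $\beta_w/(2\beta_w-1)$ must be read as its limit; and I would re-verify the sign of the region-II exponent before evaluating the improper integral so that the resulting closed form is justified.
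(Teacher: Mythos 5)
Your setup is exactly the paper's: the paper's proof is a one-line application of the law of total expectation, conditioning on whether $|\hat{h}_{aw}|^2$ falls below or above $x_0=(\lambda^{\dagger}-\sigma_w^2)/\zeta_1$, followed by ``evaluating this expression completes the proof,'' and your split integral against the exponential density of mean $1-\beta_w$ is precisely that evaluation spelled out. Your coefficient bookkeeping is correct where you make it explicit: the prefactor $-\beta_w/(2\beta_w-1)$ from $1/((1-\beta_w)a)$, the denominator $(1-\beta_w)\zeta_1+(2\beta_w-1)\zeta_0$ from the tail integral, the positivity of $x_0$, and the telescoping of the $\kappa_1^{\dagger}$ boundary term.

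However, one step will not go through as you describe it, and it is the step on which the whole regrouping hinges. Write $E\triangleq\exp\bigl(\tfrac{\sigma_w^2-\lambda^{\dagger}}{(1-\beta_w)\zeta_1}\bigr)$ and $E_j\triangleq\exp\bigl(\tfrac{\sigma_w^2-\lambda^{\dagger}}{\beta_w\zeta_j}\bigr)$. Only the $\kappa_1^{\dagger}$ boundary exponential collapses onto $E$ (indeed $E_1e^{ax_0}=E$ for any $\lambda^{\dagger}$); for the $\kappa_0^{\dagger}$ integral the boundary term is $E_0e^{ax_0}=E\,E_0/E_1$, and for the improper integral $e^{bx_0}=E\,E_0/E_1$ as well. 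Substituting the explicit threshold from (\ref{eq14}) gives $E_0/E_1=\zeta_0/\zeta_1\neq 1$, so these two contributions carry an extra factor $\zeta_0/\zeta_1$ and the resulting six terms do not regroup into the product form of (\ref{eq18}) verbatim. A numerical spot check confirms the mismatch: with $\zeta_0=1$, $\zeta_1=2$, $\beta_w=0.2$, $\sigma_w^2=1$, direct integration of (\ref{eq17}) gives $\overline{\mathbb{P}}_{E}^{w}\approx 0.186$ while (\ref{eq18}) evaluates to $\approx 0.341$. Your method is the right one and matches the paper's; just carry the algebra through honestly rather than forcing it onto (\ref{eq18}), and expect to land on a corrected closed form in which the $\kappa_0^{\dagger}$ cross term and the tail term are scaled by $\zeta_0/\zeta_1$.
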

\begin{proof}
From the law of total expectation, we have
\begin{equation}
\begin{aligned}
\overline{\mathbb{P}}_{E}^w = &\mathbb{E}_{|\hat{h}_{aw}|^2} \left[\mathbb{P}_{FA}+\mathbb{P}_{MD}  \right] \\
= &\mathbb{E}_{|\hat{h}_{aw}|^2}\left[\mathbb{P}_{FA}+\mathbb{P}_{MD} \: | \: |\hat{h}_{aw}|^2 < \frac{\lambda^{\dagger} - \sigma_w^2}{\zeta_1}  \right] \\ &\mathbb{P}\left[|\hat{h}_{aw}|^2 < \frac{\lambda^{\dagger} - \sigma_w^2}{\zeta_1}   \right] \\
+ & \:\mathbb{E}_{|\hat{h}_{aw}|^2}\left[\mathbb{P}_{FA}+\mathbb{P}_{MD} \: | \: |\hat{h}_{aw}|^2 \geq \frac{\lambda^{\dagger} - \sigma_w^2}{\zeta_1}  \right] \\ &\mathbb{P}\left[|\hat{h}_{aw}|^2 \geq \frac{\lambda^{\dagger} - \sigma_w^2}{\zeta_1}   \right],
\end{aligned}
\end{equation}
and evaluating this expression completes the proof.
\end{proof}
To achieve covertness, Alice chooses her transmit power levels to Carol and Bob such that.
\begin{equation}
\overline{\mathbb{P}}_{E}^w \geq 1-\epsilon.
\end{equation}

\subsection{Outage Probabilities at Carol and Bob}
\begin{prop}
Under hypothesis $H_1$, the outage probability at Carol for a rate $R_c$ is
\small
\begin{equation} \label{eq_prop1}
\delta_c(H_1) = 1-\frac{P_{\Delta_c}^{\beta_c}}{\beta_c\Delta_c(P_{ac}+P_{ab}) + P_{\Delta_c}^{\beta_c}} \exp \left(-\frac{\Delta_c d_{ac}^{\alpha} \sigma_c^2}{P_{\Delta_c}^{\beta_c}}    \right),
\end{equation}
\normalsize
where $P_{\Delta_c}^{\beta_c} \triangleq (1-\beta_c)\left[P_{ac}-P_{ab}\Delta_c \right] $, and $\Delta_c \triangleq 2^{R_c}-1$.
\end{prop}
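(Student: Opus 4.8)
The plan is to characterize the instantaneous signal-to-interference-plus-noise ratio (SINR) that Carol achieves when decoding her own message under $H_1$, declare an outage whenever the supported rate $\log_2(1+\mathrm{SINR}_c)$ falls below $R_c$, and then average the outage indicator over the channel statistics. First I would write Carol's received symbol under $H_1$ from (\ref{eq1}) and split the channel as $h_{ac}=\hat h_{ac}+\tilde h_{ac}$ as in (\ref{eq2}). Carol can only coherently exploit the known part $\hat h_{ac}$, so the useful term is $\hat h_{ac}\sqrt{P_{ac}}\,x_c^i/d_{ac}^{\alpha/2}$ with power $|\hat h_{ac}|^2 P_{ac}/d_{ac}^{\alpha}$, while everything else is treated as effective noise.

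The key modelling step is to collect the interference correctly. Grouping the two contributions that multiply the unknown part $\tilde h_{ac}$, namely $\tilde h_{ac}(\sqrt{P_{ac}}\,x_c^i+\sqrt{P_{ab}}\,x_b^i)/d_{ac}^{\alpha/2}$, and using that $x_c^i$ and $x_b^i$ are independent unit-power Gaussian symbols, this term carries instantaneous power $|\tilde h_{ac}|^2(P_{ac}+P_{ab})/d_{ac}^{\alpha}$. Adding the interference from Bob's signal through the known channel, $|\hat h_{ac}|^2 P_{ab}/d_{ac}^{\alpha}$, and the receiver noise $\sigma_c^2$, gives
\begin{equation*}
\mathrm{SINR}_c=\frac{|\hat h_{ac}|^2 P_{ac}}{|\tilde h_{ac}|^2(P_{ac}+P_{ab})+|\hat h_{ac}|^2 P_{ab}+d_{ac}^{\alpha}\sigma_c^2}.
\end{equation*}
An outage occurs precisely when $\mathrm{SINR}_c<\Delta_c$ with $\Delta_c\triangleq 2^{R_c}-1$. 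Writing $g\triangleq|\hat h_{ac}|^2$ and $u\triangleq|\tilde h_{ac}|^2$ and separating the two magnitudes, the outage event becomes $g\,(P_{ac}-\Delta_c P_{ab})<\Delta_c[u(P_{ac}+P_{ab})+d_{ac}^{\alpha}\sigma_c^2]$.

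From (\ref{eq2}), $g$ is exponential with mean $1-\beta_c$, $u$ is exponential with mean $\beta_c$, and the two are independent. I would then apply the law of total probability: first integrate over $g$ for fixed $u$, giving $1-\exp\{-\Delta_c[u(P_{ac}+P_{ab})+d_{ac}^{\alpha}\sigma_c^2]/[(1-\beta_c)(P_{ac}-\Delta_c P_{ab})]\}$, and then take the expectation over $u$. Recognising that $(1-\beta_c)(P_{ac}-\Delta_c P_{ab})=P_{\Delta_c}^{\beta_c}$, the expectation over $u$ is a Laplace transform of the exponential density evaluated at $\Delta_c(P_{ac}+P_{ab})/P_{\Delta_c}^{\beta_c}$, which produces the rational factor $P_{\Delta_c}^{\beta_c}/[\beta_c\Delta_c(P_{ac}+P_{ab})+P_{\Delta_c}^{\beta_c}]$, while the noise part factors out as $\exp(-\Delta_c d_{ac}^{\alpha}\sigma_c^2/P_{\Delta_c}^{\beta_c})$. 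Combining these yields exactly (\ref{eq_prop1}).

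I expect the main obstacle to be the modelling decision rather than the integration: one must keep $|\tilde h_{ac}|^2$ as a random variable to be averaged, instead of replacing the uncertainty power by its mean $\beta_c(P_{ac}+P_{ab})$. Averaging over the random uncertainty is precisely what turns that contribution into the rational factor in (\ref{eq_prop1}); substituting the mean would instead give an exponential factor and the wrong expression. A secondary point to state explicitly is the implicit assumption $P_{ac}-\Delta_c P_{ab}>0$ (equivalently $P_{\Delta_c}^{\beta_c}>0$): when it fails the outage probability equals $1$, and the positivity of this term is what allows the inequality to be solved for $g$ without reversing the sense.
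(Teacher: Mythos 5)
Your proposal is correct and follows essentially the same route as the paper: write the SINR with $|\hat h_{ac}|^2 P_{ac}$ as the useful power and $|\hat h_{ac}|^2 P_{ab}+|\tilde h_{ac}|^2(P_{ac}+P_{ab})+d_{ac}^{\alpha}\sigma_c^2$ as the effective noise, rearrange the outage event into a condition on $|\hat h_{ac}|^2$, and average the resulting exponential CDF over the exponential density of $|\tilde h_{ac}|^2$. Your explicit remark that $P_{ac}-\Delta_c P_{ab}>0$ is needed to solve for $|\hat h_{ac}|^2$ without reversing the inequality is a point the paper leaves implicit, and is a worthwhile addition.
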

\begin{proof}
Under $H_1$, the signal vector received at Carol is
\small
\begin{equation}\label{eq19}
\begin{aligned}
\bm{y}_c = & \: \hat{h}_{ac}\frac{\sqrt{P_{ac}}\bm{x}_c}{d_{ac}^{\alpha/2}}+\tilde{h}_{ac}\frac{\sqrt{P_{ac}}\bm{x}_c}{d_{ac}^{\alpha/2}}
+\hat{h}_{ac}\frac{\sqrt{P_{ab}}\bm{x}_b}{d_{ac}^{\alpha/2}} \\ &+\tilde{h}_{ac}\frac{\sqrt{P_{ab}}\bm{x}_b}{d_{ac}^{\alpha/2}}+\bm{v}_c,
\end{aligned}
\end{equation}
\normalsize
and the signal-to-noise ratio (SNR) is
\begin{equation} \label{eq20}
\begin{aligned}
\text{SNR}_{H_1}^c = \frac{|\hat{h}_{ac}|^2 P_{ac}}{|\hat{h}_{ac}|^2 P_{ab}+|\tilde{h}_{ac}|^2(P_{ac}+P_{ab})+d_{ac}^{\alpha}\sigma_c^2}.
\end{aligned}
\end{equation}
The outage probability at Carol is
\begin{equation}\label{eq21}
\begin{aligned}
\delta_c(H_1) &= \mathbb{P}\left[ \log_2(1+\text{SNR}_{H_1}^c) < R_c  \right] \\
&= \mathbb{P}\left[ \frac{|\hat{h}_{ac}|^2 P_{ac}}{|\hat{h}_{ac}|^2 P_{ab}+|\tilde{h}_{ac}|^2(P_{ac}+P_{ab})+d_{ac}^{\alpha} \sigma_c^2} < \Delta_c    \right] \\
&= \mathbb{P}\left[|\hat{h}_{ac}|^2 < \frac{\Delta_c \left[ |\tilde{h}_{ac}|^2(P_{ac}+P_{ab})+d_{ac}^{\alpha} \sigma_c^2 \right]}{P_{ac}-P_{ab}\Delta_c} \right],
\end{aligned}
\end{equation}
where $\Delta_c \triangleq 2^{R_c}-1$. Since $\hat{h}_{ac}$ and $\tilde{h}_{ac}$ are independent, thus
\small
\begin{equation} \label{eq22}
\begin{aligned}
\delta_c(H_1) &= \int_{0}^{\infty}  \left[1-\exp \left(- \frac{\Delta_c \left[ |\tilde{h}_{ac}|^2(P_{ac}+P_{ab})+d_{ac}^{\alpha} \sigma_c^2 \right]}{(1-\beta_c)(P_{ac}-P_{ab}\Delta_c)} \right) \right] \\  & \qquad \qquad \qquad \qquad \qquad \qquad \qquad  \cdot f_{|\tilde{h}_{ac}|^2}(|\tilde{h}_{ac}|^2)\mathrm{d}|\tilde{h}_{ac}|^2,
\end{aligned}
\end{equation}
\normalsize
and the solution of this integration gives the desired result.
\end{proof}
It is important to note here that under $H_0$, the outage probability at Carol is
\begin{equation} \label{eq24}
\delta_c(H_0)= 1-\frac{(1-\beta_c)}{\beta_c\Delta_c+(1-\beta_c)}\exp\left(- \frac{\Delta_c d_{ac}^{\alpha} \sigma_c^2}{(1-\beta_c)P_{ac}} \right),
\end{equation}
which has a value lower than $\delta_c(H_1)$ in (\ref{eq_prop1}), due to no interference from Alice-Bob transmission. Thus Carol's performance deteriorates under hypothesis $H_1$.
\begin{prop}
Under hypothesis $H_1$. the outage probability at Bob for a rate $R_b$ is
\small
\begin{equation} \label{eq26}
\begin{aligned}
\delta_b(H_1) = 1-\frac{P_{\Delta_b}^{\beta_b}}{\beta_b\Delta_b(P_{ab}+P_{ac}) + P_{\Delta_b}^{\beta_b} } \exp \left(-\frac{\Delta_b d_{ab}^{\alpha} \sigma_b^2}{P_{\Delta_b}^{\beta_b}}    \right)
\end{aligned}
\end{equation}
\normalsize
where $P_{\Delta_b}^{\beta_b} \triangleq (1-\beta_b)\left[P_{ab}-P_{ac}\Delta_b\right]$, and $\Delta_b \triangleq 2^{R_b}-1$.
\end{prop}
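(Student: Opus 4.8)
The plan is to mirror the derivation of Proposition 2, exploiting the structural symmetry of the model: from Bob's viewpoint $\bm{x}_b$ is the intended signal while $\bm{x}_c$ acts as interference, which is exactly the reverse of Carol's situation and amounts to interchanging the roles of $P_{ac}$ and $P_{ab}$. First I would specialize the received vector (\ref{eq1}) to user $k=b$ under $H_1$ and split the channel via (\ref{eq2}) into its known part $\hat{h}_{ab}$ and uncertain part $\tilde{h}_{ab}$, producing four signal terms plus noise in direct analogy with (\ref{eq19}).

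Next I would read off the effective SNR. The key modeling point is that only the known channel component carries coherently decodable energy for the intended stream, so the numerator is $|\hat{h}_{ab}|^2 P_{ab}$; the known-channel image of the unwanted stream $\bm{x}_c$ contributes $|\hat{h}_{ab}|^2 P_{ac}$ to the denominator, while the uncertain channel part corrupts \emph{both} streams and contributes $|\tilde{h}_{ab}|^2 (P_{ac}+P_{ab})$, together with the path-loss–scaled noise $d_{ab}^{\alpha}\sigma_b^2$. This gives the counterpart of (\ref{eq20}). I would then write the outage event $\log_2(1+\text{SNR}_{H_1}^b)<R_b$ and rearrange it, as in (\ref{eq21}), into a threshold on $|\hat{h}_{ab}|^2$ conditioned on $|\tilde{h}_{ab}|^2$, with $\Delta_b\triangleq 2^{R_b}-1$. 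Using the independence of $\hat{h}_{ab}$ and $\tilde{h}_{ab}$, together with the facts that $|\hat{h}_{ab}|^2$ is exponential with mean $1-\beta_b$ and $|\tilde{h}_{ab}|^2$ is exponential with mean $\beta_b$, the outage probability collapses to a single integral over $|\tilde{h}_{ab}|^2$, analogous to (\ref{eq22}). That integral has the elementary form $\int_0^\infty \bigl(1-e^{-(\mu u+\nu)}\bigr)\frac{1}{\beta_b}e^{-u/\beta_b}\,\mathrm{d}u$ for positive constants $\mu,\nu$, and after collecting terms into $P_{\Delta_b}^{\beta_b}\triangleq(1-\beta_b)[P_{ab}-P_{ac}\Delta_b]$ it yields the stated closed form.

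The integration itself is routine, so the only genuine subtlety — and the main obstacle — is the sign of the factor $P_{ab}-P_{ac}\Delta_b$ buried inside $P_{\Delta_b}^{\beta_b}$. The rearrangement of the outage event is valid only when $P_{ab}-P_{ac}\Delta_b>0$, i.e.\ when the target rate $R_b$ lies below the interference-limited ceiling; otherwise the threshold is vacuous and $\delta_b(H_1)=1$. I would flag this feasibility condition explicitly, since it delineates precisely the regime in which the closed-form expression is meaningful, and it is the direct analog of the implicit requirement $P_{ac}-P_{ab}\Delta_c>0$ underlying Proposition 2.
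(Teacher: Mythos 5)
Your proposal is correct and follows essentially the same route as the paper, which simply states that Proposition 3 is proved along the same lines as Proposition 2 (swap the roles of $P_{ab}$ and $P_{ac}$, form the SNR with the known channel part carrying the intended stream, rearrange the outage event into a threshold on $|\hat{h}_{ab}|^2$, and integrate over the exponential distribution of $|\tilde{h}_{ab}|^2$). Your explicit flagging of the feasibility condition $P_{ab}-P_{ac}\Delta_b>0$ is a sound addition that the paper leaves implicit.
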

\begin{proof}
The proof follows along the same lines as the proof of Proposition $2$.
\end{proof}
For given outage constraints, e.g. $\delta_c \leq 0.1$ and $\delta_b \leq 0.1$, the achievable rates for Carol and Bob, under $H_1$, can be numerically calculated using (\ref{eq_prop1}) and (\ref{eq26}). For Carol, any achievable rate that satisfies the outage constraint under $H_1$ will naturally satisfy the outage constraint under $H_0$. Hence the focus is on the performance of Carol and Bob under $H_1$.

\section{Numerical Results and Discussion}
In this section, we present the numerical results to show the effect of covertness requirement ($\epsilon$) and channel uncertainty ($\beta$) on the achievable rate region for Carol and Bob. The noise variance of all the users is assumed to be normalized to $1$, and a total transmit power constraint of $30$dB is considered at Alice. For these numerical results, we have considered $\beta_w=\beta_c=\beta_b\triangleq \beta$, while the outage probability constraints at Carol and Bob are $\delta_c \leq 0.1$ and $\delta_b \leq 0.1$, respectively.

\begin{figure}[t!]
\centering
\includegraphics[scale=0.8]{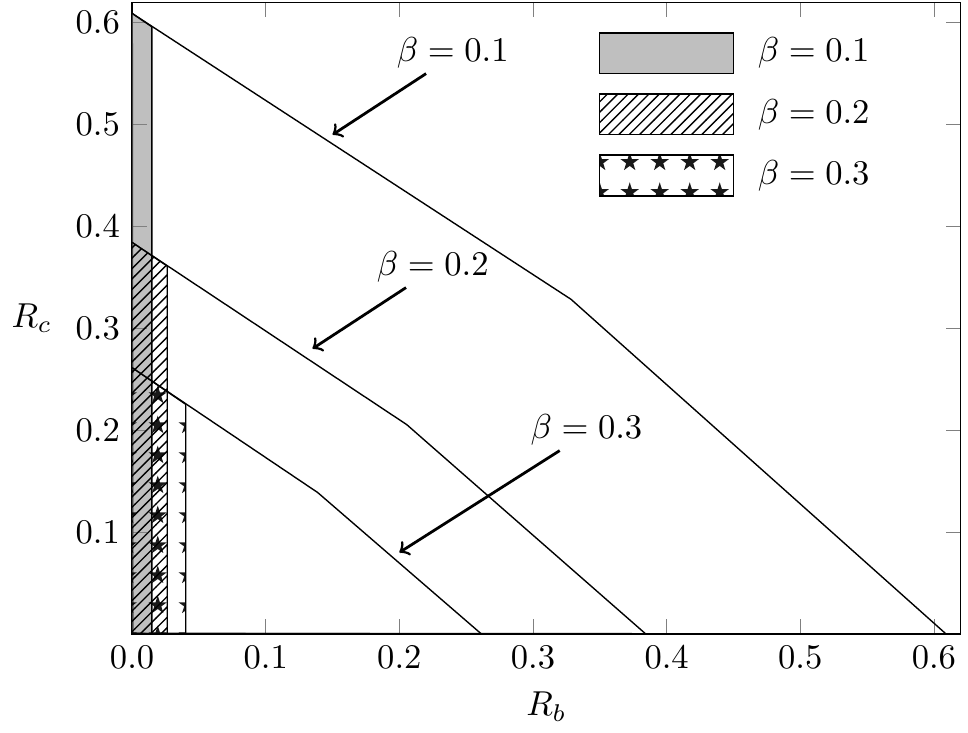}
\caption{The achievable rate region for Carol and Bob under the effect of varying channel uncertainty, $\beta$. Other parameters are $\epsilon=0.2$, $\alpha=3$ and $d_{aw}=d_{ac}=d_{ab}=5$.}
\label{fig2}
\end{figure}

Fig. \ref{fig2} shows the achievable rate region for Carol and Bob, under the effect of changing $\beta$, for a fixed $\epsilon=0.2$. The solid lines, indicated by arrows with varying values of $\beta$, determine the rate region without any covert requirement. It can be observed from the figure that increasing the value of $\beta$ allows Alice to use more power, $P_{ab}$, for transmission to Bob, hence there is an increase in Bob's achievable rate. This increase in feasible $P_{ab}$ is due to the increased channel uncertainty at Willie, causing his detection performance to deteriorate. On the other hand , there is an adverse effect on the overall rate region for Carol and Bob, since the increase in channel uncertainty variance affects their decoding performance. Thus for a fixed covert requirement, increasing the value of $\beta$ in a reasonable range incurs a rate loss for Carol, but increases the achievable rate for Bob.\footnote{It should be noted here that a value of $\beta=0.3$ or $\epsilon=0.3$ is quite large from the practical perspective. We consider such values in our numerical results to illustrate the effect of these parameters on the achievable rate region.}

Fig. \ref{fig3} shows the achievable rates for Carol and Bob, under the effect of changing $\epsilon$, for a fixed $\beta=0.2$. For a fixed channel uncertainty, relaxing $\epsilon$ from $0.1$ to $0.3$ shows an increase in the feasible rate region. Since relaxing $\epsilon$ allows a direct increase in feasible $P_{ab}$ for a given $P_{ac}$, we can clearly see an expansion in the achievable rate region, in favor of Bob.

\begin{figure}[t!]
\centering
\includegraphics[scale=0.8]{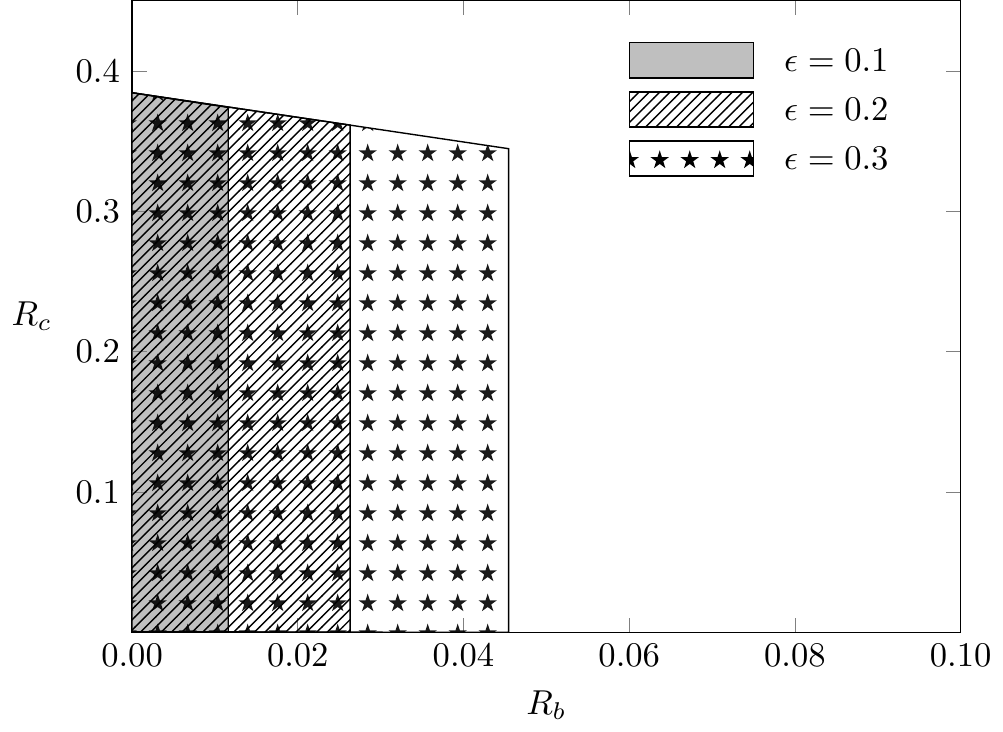}
\caption{The achievable rate region for Carol and Bob under the effect of varying covertness requirement, $\epsilon$. Other parameters are $\beta=0.2$, $\alpha=3$ and $d_{aw}=d_{ac}=d_{ab}=5$.}
\label{fig3}
\end{figure}

\section{Conclusion}
In this work, we examined how to achieve covert communication in a public and legitimate communication link when users have uncertainty about their channels. We first derived a closed-form expression for the optimal threshold of of Willie’s optimal detector. Next, we quantified the achievable outage rate region for Carol and Bob. Our results showed that the presence of channel uncertainty at Willie allows Alice to achieve a certain amount of covertness, while this channel uncertainty also affects the achievable rates for Carol and Bob.

\end{document}